\tikzset{every picture/.style={line width=0.75pt}}
\newtheorem{theorem}{Theorem}[section]
\titlespacing\section{0pt}{12pt plus 4pt minus 2pt}{0pt plus 2pt minus 2pt}
\titlespacing\subsection{0pt}{12pt plus 4pt minus 2pt}{0pt plus 2pt minus 2pt}
\begin{document}
\font\myfont=cmr12 at 15pt
\title{\myfont Over-the-Counter Market Making via Reinforcement Learning}
\author{Zhou Fang \hspace{1cm} Haiqing Xu}
\maketitle

\begin{abstract}
The over-the-counter (OTC) market is characterized by a unique feature that allows market makers to adjust bid-ask spreads based on order size. However, this flexibility introduces complexity, transforming the market-making problem into a high-dimensional stochastic control problem that presents significant challenges. To address this, this paper proposes an innovative solution utilizing reinforcement learning techniques to tackle the OTC market-making problem. By assuming a linear inverse relationship between market order arrival intensity and bid-ask spreads, we demonstrate the optimal policy for bid-ask spreads follows a Gaussian distribution. We apply two reinforcement learning algorithms to conduct a numerical analysis, revealing the resulting return distribution and bid-ask spreads under different time and inventory levels. 
\end{abstract}

\section{Introduction}
Market makers play a crucial role as liquidity providers in diverse financial markets, adapting their liquidity provision strategies to suit the specific characteristics of each market. High-frequency trading (HFT) firms primarily engage in market-making activities, which are widely recognized for their contributions to stabilizing and enhancing the efficiency of financial markets.

OTC markets encompass a wide range of financial instruments such as foreign exchanges (FX), bonds, and stocks that are not traded on formal exchanges for various reasons. In an OTC market, a specific asset class market is typically consisted by multiple dealers-to-clients (D2C) platforms and sometimes supplemented by a dealers-to-dealers (D2D) network, as observed in the case of FX markets. Market makers play a crucial role in the OTC market by providing liquidity to clients through bid-ask prices that they offer for buying and selling. These bid-ask prices are adjusted based on the order size. Given the inherent volatility of financial markets, market makers aim to maintain manageable inventories and achieve this by either adjusting their bid-ask price quotes or utilizing external avenues such as the D2D network to manage inventory levels.

The study of market making originated in the 1980s with seminal works such as \cite{grossman1988liquidity} and \cite{ho1981optimal}. A significant resurgence of interest in market-making occurred with the publication of \cite{avellaneda2008high}, which sparked a wave of subsequent literature in this area. Subsequent works such as \cite{cartea2014buy}, \cite{cartea2015order}, \cite{cartea2017algorithmic}, \cite{cartea2019market}, and \cite{cartea2020market} introduced more sophisticated structures and features, including alpha signals, order flows, and minimum resting time, into market-making models.

However, these aforementioned papers primarily focused on single-dimensional market-making problems and did not explicitly address the challenges posed by multi-dimensional market-making. The curse of dimensionality makes the multi-dimensional market-making problem significantly more intricate. Only recently, a series of papers such as \cite{bergault2021size}, \cite{bergault2021closed}, \cite{barzykin2021market}, and \cite{barzykin2022dealing} have begun to tackle this problem from a purely mathematical perspective.

In parallel, machine learning techniques have also been employed to approach market making in several papers, including \cite{ganesh2019reinforcement}, \cite{beysolow2019market}, \cite{sadighian2020extending}, and \cite{sadighian2020extending}. However, these papers tend to adopt relatively simple models and are primarily focused on demonstrating the application of machine-learning techniques to market-making, rather than delving into the complexities of addressing intricate market-making issues through machine-learning methods.  

In this research paper, we present a novel framework based on reinforcement learning to address the intricate challenges of the multi-dimensional market-making problem encountered by over-the-counter (OTC) market makers. Our approach centers on the utilization of a stochastic policy, which enables the market maker to determine bid-ask spreads. The application of stochastic policies in financial mathematics has gained recognition, as evidenced by prior works such as \cite{wang2020continuous} and \cite{wang2020reinforcement}, which leverage stochastic policies to tackle portfolio management problems. Notably, employing a stochastic policy offers several advantages, including enhanced robustness and the ability to balance exploration and exploitation, as elucidated in the notable contributions of \cite{jia2022policy(a)} and \cite{jia2022policy(b)}, who propose a unified framework encompassing policy evaluation and policy gradient techniques, building upon the earlier aforementioned works.

Within the scope of this paper, we specifically focus on the scenario where market order arrivals follow a Poisson process, with the intensity of arrivals being inversely proportional to bid-ask spreads. Under this assumption, we demonstrate that the optimal policy for bid-ask spread determination can be modeled as a Gaussian distribution. To validate the effectiveness of our proposed framework, we conduct extensive numerical experiments using simulated data, showcasing various performance metrics and bid-ask spread outcomes for specific inventory levels.

The subsequent sections of this paper are structured as follows. Section 2 provides a comprehensive overview of the model setup, elucidating the key components and variables employed in our analysis. In Section 3, we derive the Hamilton-Jacobi-Bellman (HJB) equation and establish the optimal policy's formulation, shedding light on the fundamental principles guiding our approach. Moving forward, Section 4 presents proof of the policy improvement theorem, which serves as the cornerstone of our primary method for approximating the optimal policy.

To validate the efficacy and practicality of our proposed approach, Section 5 presents a detailed analysis of the numerical results obtained through extensive experimentation. In this section, we compare the performance of our method against a traditional actor-critic algorithm, providing insightful comparisons and contrasting the bid-ask spreads policy under specific factors such as inventory level, and asset reference prices.

\section{Model}
This paper delves into the intricate realm of stochastic control problems encountered by OTC market makers, who are confronted with the challenging task of setting diverse bid-ask quotes tailored to orders of varying sizes. Furthermore, these market makers possess the flexibility to externalize their inventory to fellow market participants, thereby introducing an additional dimension of strategic decision-making.  

In the context of our study, we focus on a relatively brief trading period denoted as $[0, T]$ that typically spans several hours or a single trading day. To streamline the problem and facilitate analysis, we adopt the assumption that the volatility of the underlying asset remains constant throughout the entire trading period. This assumption allows us to concentrate on the dynamics of the asset's reference prices, also known as mid-prices, which are defined as follows:
\begin{align}
    \frac{dS_t}{S_t} = \sigma dW_t
\end{align}

In our model, we characterize the occurrence of buy and sell order executions of size $z_k$ as Poisson processes. Specifically, we denote the number of buy order executions as $N_t^+(k)$ and the number of sell order executions as $N_t^-(k)$. The intensities of these Poisson processes are denoted as $\lambda^+(k)$ and $\lambda^-(k)$, respectively.

In this paper, we simplify the setting by assuming that the market maker does not externalize its inventory. As a result, the dynamics of the inventory can be described as follows:
\begin{align}
    dq_t = \sum_{k = 1}^{K} z_k\big(dN_t^+(k) - dN_t^-(k) \big)
\end{align}
in this paper, we assume a simple setting, which is the market orders arrival intensity is linearly inverse to bid-ask spreads, shown as follows,
\begin{align}
    &\lambda_t^\pm (k) = A_{k} - B_{k} \epsilon_t^{a,b}(k) 
\end{align}
Let $\boldsymbol{\epsilon}_t = \left(\epsilon_t^{b, a}(k)\right)_{k=1}^{K}$ represent the bid-ask spreads posted by the market maker at time $t$. The probability density function for posting spreads $\boldsymbol{\epsilon}_t$ is denoted as $\boldsymbol{\pi}(\boldsymbol{\epsilon}_t | t, S, q)$. If the market maker chooses to post the spreads $\boldsymbol{\epsilon}_t$ at time $t$, the wealth exhibits the following dynamics:
\begin{align}
    dX_t = \sum_{k = 1}^{K} z_k \big[ \epsilon_t^b(k)dN_t^+(k) + \epsilon_t^a(k)dN_t^-(k)\big] + d(q_t S_t)
\end{align}

\section{Market-Making in OTC Markets}
\subsection{Hamilton-Jacobian-Bellman Equation}
 
In the context of our study, let us consider a policy $\boldsymbol{\pi}$ that guides the market maker's actions. We denote the corresponding inventory process under policy $\boldsymbol{\pi}$ as $q^{\boldsymbol{ \pi}}_t$, where its initial condition at time $t$ is specified as $S_t = S$ and $q^{\boldsymbol\pi}_t = q$. The central objective of our market maker is to optimize the expected profit by effectively managing inventory risk and promoting exploration through the utilization of a stochastic policy. To quantitatively assess the level of exploration, we employ the cross-entropy of the stochastic policy, which offers insights into the policy's capacity for exploration and exploitation. Furthermore, to discourage excessive inventory holdings over the trading period, we introduce a penalty term based on the average square of the inventory. By combining these elements, we define the value function under policy $\boldsymbol{\pi}$ as follows:
 \begin{align}
     &\hspace{0.5cm} V^{\boldsymbol \pi}(t, S, q) \nonumber \\
     &=\mathbb{E} \bigg[ \int_t^T \int_{\boldsymbol \epsilon_u} \Big[  \sum_{k = 1}^{K} z_k\big[ \epsilon_u^b(k)dN_u^+(k) + \epsilon_u^a(k)dN_u^-(k)\big] + d(q_u S_u)  \Big] \boldsymbol \pi (\boldsymbol \epsilon_u | u, S_u, q_u^{\boldsymbol \pi}) d\boldsymbol \epsilon_u  \nonumber \\
     &- \gamma \int_t^T \int_{\boldsymbol \epsilon_u}\boldsymbol \pi(\boldsymbol \epsilon_u | u, S_u, q_u^{\boldsymbol \pi}) \log \boldsymbol \pi(\boldsymbol \epsilon_u | u, S_u, q_u^{\boldsymbol \pi}) d\boldsymbol \epsilon_u du  - \delta \int_t^T q_u^2 du \hspace{0.1cm} \bigg | \hspace{0.1cm} S_t = S, q_t^{\boldsymbol \pi} = q \bigg] \nonumber \\
     &= \mathbb{E} \bigg[ \int_t^T \int_{\boldsymbol \epsilon_u} \sum_{k = 1}^{K} \Big[ z_k \big( S_u + \epsilon_u^b(k) \big) dN_u^+(k) - z_k\big( S_u - \epsilon_u^a(k) \big) dN_u^-(k) \Big] \boldsymbol \pi(\boldsymbol \epsilon_u | u, S_u, q_u^{\boldsymbol \pi}) d\boldsymbol \epsilon_u \nonumber \\
     & - \gamma \int_t^T \int_{\boldsymbol \epsilon_u}\boldsymbol \pi(\boldsymbol \epsilon_u | u, S_u,  q_u^{\boldsymbol \pi}) \log \boldsymbol \pi(\boldsymbol \epsilon_u | u, S_u,  q_u^{\boldsymbol \pi}) d\boldsymbol \epsilon_u du   - \delta \int_t^T q_u^2 du\hspace{0.1cm} \bigg | \hspace{0.1cm} S_t = S, q_t^{\boldsymbol \pi} = q \bigg]
 \end{align}
 then the value function under the optimal policy is 
 \begin{align}
     &\hspace{0.5cm}V(t, S, q) \nonumber \\
     & = \underset{\boldsymbol \pi}{\max} \hspace{0.1cm} \mathbb{E} \bigg[ \int_t^T \int_{\boldsymbol \epsilon_u} \sum_{k = 1}^{K} \Big[ z_k \big( S_u + \epsilon_u^b(k) \big) dN_u^+(k) - z_k\big( S_u - \epsilon_u^a(k) \big) dN_u^-(k) \Big] \boldsymbol \pi(\boldsymbol \epsilon_u | u, S_u,  q_u^{\boldsymbol \pi}) d\boldsymbol \epsilon_u \nonumber \\
     & - \gamma \int_t^T \int_{\boldsymbol \epsilon_u}\boldsymbol \pi(\boldsymbol \epsilon_u | u, S_u,  q_u^{\boldsymbol \pi}) \log \boldsymbol \pi(\boldsymbol \epsilon_u | u, S_u,  q_u^{\boldsymbol \pi}) d\boldsymbol \epsilon_u du   - \delta \int_t^T q_u^2 du \hspace{0.1cm} \bigg | \hspace{0.1cm} S_t = S, q_t^{\boldsymbol \pi} = q \bigg] \nonumber \\
     &= \underset{\boldsymbol \pi}{\max} \hspace{0.1cm} \mathbb{E} \bigg[ \int_t^{t + \Delta t} \int_{\boldsymbol \epsilon_u} \sum_{k = 1}^{K} \Big[ z_k \big( S_u + \epsilon_u^b(k)  \big) dN_u^+(k) - z_k\big( S_u - \epsilon_u^a(k) \big) dN_u^-(k) \Big] \boldsymbol \pi(\boldsymbol \epsilon_u | u, S_u,  q_u^{\boldsymbol \pi}) d\boldsymbol \epsilon_u - \delta \int_t^{t+\Delta t} q_u^2 du  \nonumber \\
     & - \gamma \int_t^{t + \Delta t} \int_{\boldsymbol \epsilon_u}\boldsymbol \pi(\boldsymbol \epsilon_u | u, S_u, q_u^{\boldsymbol \pi}) \log \boldsymbol \pi(\boldsymbol \epsilon_u | u, S_u,  q_u^{\boldsymbol \pi}) d\boldsymbol \epsilon_u du  + V(t + \Delta t, S_t + \Delta S_t, q_t + \Delta q_t^{\boldsymbol \pi})   \hspace{0.1cm} \bigg | \hspace{0.1cm} S_t = S, q_t^{\boldsymbol \pi} = q \bigg] \nonumber \\
     & = \underset{\boldsymbol \pi}{\max} \hspace{0.1cm} \bigg \{  \int_{\boldsymbol \epsilon_t} \sum_{k = 1}^{K} \Big[ z_k \big( S_t + \epsilon_t^b(k) \big) \lambda_t^+(k) - z_k\big( S_t - \epsilon_t^a(k) \big) dN_t^-(k) \Big] \boldsymbol \pi(\boldsymbol \epsilon_t | t, S_t,  q_t^{\boldsymbol \pi}) d\boldsymbol \epsilon_t  \Delta t - \delta q_t^2\Delta t \nonumber \\
     & - \gamma \int_{\boldsymbol \epsilon_t}\boldsymbol \pi(\boldsymbol \epsilon_t | t, S_t, q_t^{\boldsymbol \pi}) \log \boldsymbol \pi(\boldsymbol \epsilon_t | t, S_t, q_t^{\boldsymbol \pi}) d\boldsymbol \epsilon_t \Delta t  + \mathbb{E} \Big[   V(t + \Delta t, S_t + \Delta S_t, q_t^{\boldsymbol \pi} + \Delta q_t^{\boldsymbol \pi})  \hspace{0.1cm} \Big | \hspace{0.1cm} S_t = S, q_t^{\boldsymbol \pi} = q  \Big ]
     \bigg \}
 \end{align}
to make notation simplier, denote $\mathcal{L}V(t, S_t, q_t)$ as
\begin{align}
    \mathcal{L}V(t, S_t, q_t) = V(t, S_t, q_t) + \big(\partial_t V(t, S_t, q_t) + \frac{1}{2} \sigma^2 \partial_{SS} V(t, S_t, q_t)\big) \Delta t + \sigma \partial_S V(t, S_t, q_t) dW_t
\end{align}
since $dS_t = \sigma S_t dW_t$, and $dq_t = \sum_k z_k(dN_t^+(k) - dN_t^-(k))$, by the Ito formula, we have the following,   
\begin{align}
    & \hspace{0.5cm}V(t + \Delta t, S_t + \Delta S_t, q_t + \Delta q_t) \nonumber \\
    &= V(t + \Delta t, S_t + \Delta S_t, q_t) \prod_{k} (1 - dN_t^+(k))(1 - dN_t^-(k)) \nonumber \\
    & + \sum_{k} \Big[V(t + \Delta t, S_t +  \Delta S_t, q_t + z_k) dN_t^+(k) + V(t + \Delta t, S_t +  \Delta S_t, q_t - z_k) dN_t^-(k)\Big] \\
    &= \mathcal{L}V(t, S_t, q_t) \prod_{k} (1 - dN_t^+(k))(1 - dN_t^-(k)) + \sum_{k} \mathcal{L}V(t, S_t, q_t + z_k)dN_t^+(k) + \mathcal{L}V(t, S_t, q_t - z_k) dN_t^-(k)  \nonumber
\end{align}
it is crucial to note that the previous derivation of the Ito formula is based on the assumption that the inventory process is defined as $dq_t = \sum_k z_k(dN_t^+(k) - dN_t^-(k))$. It is essential to emphasize that the intensities of the Poisson processes are determined by the quoted bid-ask spreads. As a result, the inventory process in the aforementioned Ito formula assumes that the bid-ask spreads are already determined. Hence, when calculating the conditional expectation $\mathbb{E}[V(t+\Delta t, S_t + \Delta S_t, q_t^{\boldsymbol{\pi}} + \Delta q_t^{\boldsymbol{\pi}}) | S_t = S, q_t^{\boldsymbol{\pi}} = q]$, it becomes necessary to consider and average over all possible scenarios. Consequently, the conditional expectation can be expressed as follows, taking into account the various scenarios
\begin{align}
    &\hspace{0.5cm}\mathbb{E} \Big[ V(t+\Delta t, S_t + \Delta S_t, q_t^{\boldsymbol \pi} + \Delta q_t^{\boldsymbol \pi}) \hspace{0.1cm} \Big | \hspace{0.1cm} S_t = S, q_t^{\boldsymbol \pi} = q \Big] \nonumber \\
    &= V(t, S_, q) + \int_{\boldsymbol \epsilon_t} \boldsymbol \pi(\boldsymbol \epsilon_t | t, S, q)\Big[ - \sum_{k} \big(\lambda_t^+(k)+ \lambda_t^-(k)\big) V(t, S, q) + \partial_t V(t, S, q) + \frac{1}{2} \sigma^2 \partial_{SS} V(t, S, q) \nonumber \\
    &\hspace{0.3cm}+ \sum_{k} \big[\lambda_t^+(k) V(t, S, q + z_k) + \lambda_t^-(k) V(t, S, q - z_k) \big]  \Big] d \boldsymbol \epsilon_t \Delta t
\end{align}
then the Hamilton-Jacobian-Bellman equation is 
\begin{align}
    & \underset{\boldsymbol \pi}{\max} \Bigg\{ \hspace{0.1cm} \int_{\boldsymbol \epsilon_t} \sum_{k} \Big[\lambda_t^+(k) V(t, S, q + z_k) + \lambda_t^-(k) V(t, S, q - z_k) - \big(\lambda_t^+(k)+ \lambda_t^-(k)\big) V(t, S, q) \Big]   \boldsymbol \pi(\boldsymbol \epsilon_t | t, S, q)d\boldsymbol \epsilon_t \nonumber \\
    & +  \int_{\boldsymbol \epsilon_t} \sum_{k = 1}^{N} \Big[ z_k \lambda_t^+(k)\big( S + \epsilon_t^b(k) \big) - z_k \lambda_t^-(k)\big( S - \epsilon_t^a(k) \big) \Big] \boldsymbol \pi(\boldsymbol \epsilon_t | t, S, q) d\boldsymbol \epsilon_t \nonumber \\
     & - \gamma \int_{\boldsymbol \epsilon_t}\boldsymbol \pi(\boldsymbol \epsilon_t | t, S,  q) \log \boldsymbol \pi(\boldsymbol \epsilon_t | t, S, q) d\boldsymbol \epsilon_t \Bigg \} - \delta q_t^2 + \partial_t V(t, S, q) + \frac{1}{2} \sigma^2 \partial_{SS} V(t, S, q) \nonumber \\
     & = 0
\end{align}

\subsection{Optimal Stochastic Policy}
To determine the maximizer for the quantity inside the max bracket of the HJB equation, we utilize the calculus of variations
\begin{align}
    0 = &\int_{\boldsymbol \epsilon_t} \sum_{k} \Big[\lambda_t^+(k) V(t, S, q + z_k) + \lambda_t^-(k) V(t, S, q - z_k) - \big(\lambda_t^+(k)+ \lambda_t^-(k)\big) V(t, S, q) \Big]  \delta \boldsymbol \pi d\boldsymbol \epsilon_t \nonumber \\
    & +  \int_{\boldsymbol \epsilon_t} \sum_{k = 1}^{N} \Big[ z_k \lambda_t^+(k)\big( S + \epsilon_t^b(k)  \big) - z_k \lambda_t^-(k)\big( S - \epsilon_t^a(k)  \big) \Big] \delta \boldsymbol \pi d\boldsymbol \epsilon_t \nonumber \\
     & - \gamma \int_{\boldsymbol \epsilon_t}\boldsymbol \pi \frac{\delta \boldsymbol \pi}{\boldsymbol \pi} d\boldsymbol \epsilon_t - \gamma \int_{\boldsymbol \epsilon_t} \delta \boldsymbol \pi \log \boldsymbol \pi d \boldsymbol \epsilon_t
\end{align}
since $\boldsymbol \pi$ is probability density distribution, then 
\begin{align}
    \int_{\boldsymbol \epsilon_t} \delta \boldsymbol \pi d \boldsymbol \epsilon_t = 0
\end{align}
the equation (10) becomes
\begin{align}
    0 = &\int_{\boldsymbol \epsilon_t} \delta \boldsymbol \pi \bigg (\sum_{k} \Big[\lambda_t^+(k) V(t, S, q + z_k) + \lambda_t^-(k) V(t, S, q - z_k) - \big(\lambda_t^+(k)+ \lambda_t^-(k)\big) V(t, S_t, q_t) \nonumber \\
    & +  z_k \lambda_t^+(k)\big( S + \epsilon_t^b(k)  \big) - z_k \lambda_t^-(k)\big( S - \epsilon_t^a(k)  \big) \Big] - \gamma 
 (\delta \boldsymbol \pi) \log \boldsymbol \pi   d\boldsymbol \epsilon_t \bigg)   d \boldsymbol \epsilon_t
\end{align}
the quantity inside the bracket above is a constant
\begin{align}
    C &= \sum_{k} \Big[\lambda_t^+(k) V(t, S, q + z_k) + \lambda_t^-(k) V(t, S, q - z_k) - \big(\lambda_t^+(k)+ \lambda_t^-(k)\big) V(t, S, q) \nonumber \\
    &+  z_k \lambda_t^+(k)\big( S + \epsilon_t^b(k) \big) - z_k \lambda_t^-(k)\big( S - \epsilon_t^a(k) \big) \Big] - \gamma \log \boldsymbol \pi 
\end{align}
to simplify the notations, let
\begin{align}
    &\mathcal{H}_k^+(t, S, q, \boldsymbol \pi) = V^{\boldsymbol \pi}(t, S, q + z_k) - V^{\boldsymbol \pi}(t, S, q) + z_k S \\
    &\mathcal{H}_k^-(t, S, q, \boldsymbol \pi) = V^{\boldsymbol \pi}(t, S, q - z_k) - V^{\boldsymbol \pi}(t, S, q) - z_k S
\end{align}
under the optimal policy $\boldsymbol \pi^*$, there is 
\begin{align}
    &\mathcal{H}_k^+(t, S, q) = V(t, S, q + z_k) - V(t, S, q) + z_k S  \\
    &\mathcal{H}_k^-(t, S, q) = V(t, S, q - z_k) - V(t, S, q) - z_k S 
\end{align}
then the optimal stochastic policy is  
\begin{align}
    \boldsymbol \pi^* (\boldsymbol \epsilon_t | t, S, q) &\propto \exp \Big \{ \frac{1}{\gamma}  \sum_k \big(A_k - B_k \epsilon_t^{a,b}(k)\big) \big(z_k \epsilon_t^{a,b}(k) + \mathcal{H}_k^\pm(t, S, q)\big)  \Big \} \nonumber \\
    &\propto \prod_k \exp \Big \{ -\frac{z_kB_k}{\gamma} \Big[  \epsilon_t^{a,b}(k) - \frac{A_k}{2B_k} + \frac{\mathcal{H}_k^\pm(t, S, q)}{2z_k}    \Big]^2  \Big \} \nonumber \\
    &\propto \prod_k \mathcal{N} \Big(\epsilon_t^{a,b} \hspace{0.1cm} \big| \hspace{0.1cm}    \frac{A_k}{2B_k} - \frac{\mathcal{H}_k^\pm(t, S, q)}{2z_k}, \frac{\gamma}{2z_kB_k}       \Big)
\end{align}
hence, it is evident that the optimal policy corresponds to a multi-dimensional Gaussian distribution. To streamline the notation, let us introduce the following notation 
\begin{align}
    \boldsymbol \mu(t, S, q, \boldsymbol \pi) = \Big(\frac{A_1}{2B_1} - \frac{\mathcal{H}_1^\pm(t, S, q, \boldsymbol \pi)}{2z_1}, ..., \frac{A_N}{2B_N} - \frac{\mathcal{H}_N^\pm(t, S, q, \boldsymbol \pi)}{2z_N} \Big) \nonumber 
\end{align}

\[
  \boldsymbol \Sigma =
  \begin{bmatrix}
    \frac{\gamma}{2z_1B_1} & & & &  \\
    & \frac{\gamma}{2z_1B_1} & & &  \\
    & & \ddots & &\\
    & & &\frac{\gamma}{2z_kB_k} & \\
    & & & & \frac{\gamma}{2z_kB_k} 
  \end{bmatrix}
\]
then the optimal policy becomes
\begin{align}
    \boldsymbol \pi^* \sim \mathcal{N}(\cdot \hspace{0.1cm}| \hspace{0.1cm} \boldsymbol \mu(t, S, q, \boldsymbol \pi^*), \boldsymbol \Sigma)
\end{align}

\section{Policy Improvement Theorem}
\begin{theorem}[policy improvement theorem]
Given any $\boldsymbol \pi$, let the new policy $\boldsymbol \pi_{new}$ to be
\begin{align}
    \boldsymbol  \pi_{new}  \sim \mathcal{N} (\hspace{0.1cm} \cdot \hspace{0.1cm}| \hspace{0.1cm} \boldsymbol \mu(t, S, q, \boldsymbol \pi), \boldsymbol \Sigma )
\end{align}
then the following inequality holds
\begin{align}
    V^{\boldsymbol \pi}(t, S, q) \leq V^{\boldsymbol \pi_{new}}(t, S,  q)
\end{align}
\end{theorem}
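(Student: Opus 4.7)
The plan is to adapt the classical policy-improvement argument of stochastic control to this entropy-regularized jump-diffusion setting. The driving observation is that, by the same variational calculation used in Section~3.2 to derive the optimal policy, $\boldsymbol\pi_{new}$ is precisely the pointwise-in-$\boldsymbol\epsilon_t$ maximizer of the HJB integrand but with $V^{\boldsymbol\pi}$ substituted in place of the optimal value $V$. Once this observation is in hand, Dynkin's formula applied along the trajectory generated by $\boldsymbol\pi_{new}$ should convert a pointwise gain in the integrand into a gain in expected cumulative reward.

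First I would write down the ``fixed-policy HJB'' identity satisfied by $V^{\boldsymbol\pi}$. Repeating the dynamic-programming derivation of Section~3.1 without the outer $\max$ gives
\begin{align*}
  &\int_{\boldsymbol\epsilon_t}\!\sum_k\!\Big[\lambda_t^+(k)V^{\boldsymbol\pi}(t,S,q+z_k)+\lambda_t^-(k)V^{\boldsymbol\pi}(t,S,q-z_k)-\big(\lambda_t^+(k)+\lambda_t^-(k)\big)V^{\boldsymbol\pi}(t,S,q)\Big]\boldsymbol\pi\,d\boldsymbol\epsilon_t \\
  &+\int_{\boldsymbol\epsilon_t}\!\sum_k\!\Big[z_k\lambda_t^+(k)\big(S+\epsilon_t^b(k)\big)-z_k\lambda_t^-(k)\big(S-\epsilon_t^a(k)\big)\Big]\boldsymbol\pi\,d\boldsymbol\epsilon_t \\
  &-\gamma\!\int_{\boldsymbol\epsilon_t}\!\boldsymbol\pi\log\boldsymbol\pi\,d\boldsymbol\epsilon_t-\delta q^2+\partial_t V^{\boldsymbol\pi}(t,S,q)+\tfrac{1}{2}\sigma^2\partial_{SS}V^{\boldsymbol\pi}(t,S,q)=0.
\end{align*}
Second, by the calculus-of-variations argument of Section~3.2 applied with $V^{\boldsymbol\pi}$ in place of the optimal value, the Gaussian density $\boldsymbol\pi_{new}=\mathcal{N}(\cdot\mid\boldsymbol\mu(t,S,q,\boldsymbol\pi),\boldsymbol\Sigma)$ is the unique probability density that maximizes the first three lines over admissible policies. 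Substituting $\boldsymbol\pi_{new}$ for $\boldsymbol\pi$ in those three terms can only increase their value, yielding the pointwise inequality that the integrand evaluated at $\boldsymbol\pi_{new}$ plus $\partial_t V^{\boldsymbol\pi}+\tfrac12\sigma^2\partial_{SS}V^{\boldsymbol\pi}-\delta q^2$ is nonnegative. Third, I would apply Dynkin's formula (Ito's formula for jump-diffusions) to $V^{\boldsymbol\pi}(u,S_u,q_u^{\boldsymbol\pi_{new}})$ along the process controlled by $\boldsymbol\pi_{new}$, integrate from $t$ to $T$, and take expectations; the compensated Brownian and Poisson martingale increments vanish in expectation, and the terminal condition $V^{\boldsymbol\pi}(T,\cdot,\cdot)=0$ shared by both value functions kills the boundary term. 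Rearranging then yields
\begin{align*}
  V^{\boldsymbol\pi}(t,S,q)\le\mathbb{E}^{\boldsymbol\pi_{new}}\!\left[\int_t^T\text{(running reward)}-\gamma\boldsymbol\pi_{new}\log\boldsymbol\pi_{new}-\delta q_u^2\,du\ \Big|\ S_t=S,\,q_t^{\boldsymbol\pi_{new}}=q\right]=V^{\boldsymbol\pi_{new}}(t,S,q).
\end{align*}

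The hard part will be the technical justification of this last step: establishing enough regularity of $V^{\boldsymbol\pi}$ to invoke Dynkin's formula, and verifying integrability of the compensated Poisson martingales under $\boldsymbol\pi_{new}$. Because $\boldsymbol\pi_{new}$ is Gaussian with unbounded support while the intensities $\lambda_t^\pm(k)=A_k-B_k\epsilon_t^{a,b}(k)$ can become negative or unboundedly large at extreme spreads, one must either truncate $\boldsymbol\pi_{new}$ to the region where intensities remain nonnegative, or impose moment conditions on $A_k,B_k$ and $\gamma$ guaranteeing that the compensated jump terms are genuine martingales and that the expectation of the Gaussian entropy is finite. Granted these standard admissibility hypotheses, the argument is a direct extension of the classical policy-improvement theorem to the entropy-regularized continuous-time jump-diffusion setting.
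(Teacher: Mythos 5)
Your proposal follows essentially the same route as the paper's own proof: establish the fixed-policy HJB identity for $V^{\boldsymbol\pi}$, observe via the Section~3.2 variational argument that $\boldsymbol\pi_{new}$ maximizes the relevant integrand so that substituting it turns the identity into a nonnegativity inequality, and then propagate this along the $\boldsymbol\pi_{new}$-controlled trajectory via It\^o/Dynkin up to the terminal time where the two value functions coincide. Your additional remarks on integrability of the Gaussian policy against the affine intensities and the regularity needed for Dynkin's formula flag genuine gaps that the paper itself leaves implicit, but the core argument is the same.
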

\begin{proof}
Let $q_t^{\boldsymbol \pi_{new}}$ denote the inventory process under the policy $\pi_{new}$, with an initial condition at time $t$ of $q_t^{\boldsymbol \pi_{new}} = q$, and $S_t = S$. Applying the Ito formula and averaging over all possible scenarios, we obtain the following expression
\begin{align}
    &\hspace{0.5cm}V^{\boldsymbol \pi}(t, S, q) \nonumber \\
    & = \mathbb{E} \Big[V^{\boldsymbol \pi}(s, S_s, q_s^{\boldsymbol \pi_{new}}) + \int_t^s \int_{\boldsymbol \epsilon_u}  \boldsymbol \pi_{new}(\boldsymbol \epsilon_u | u, S_u, \boldsymbol q_u^{\boldsymbol \pi_{new}}) V^{\boldsymbol \pi}(u, S_u, q_u^{ \boldsymbol \pi_{new}})\sum_k [\lambda_u^+(k) + \lambda_u^-(k)] d \boldsymbol \epsilon_u du \nonumber \\
    & -\int_t^s \int_{\boldsymbol \epsilon_u} \boldsymbol \pi_{new}(\boldsymbol \epsilon_u | u, S_u, q_u^{\boldsymbol \pi_{new}})\sum_{k}\Big[V^{\boldsymbol \pi}(u, S_u, q_u^{\boldsymbol \pi_{new}} + z_k) \lambda_u^+(k)  + V^{\boldsymbol \pi}(u, S_u, q_u^{\boldsymbol \pi_{new}} - z_k) \lambda_u^-(k)  \Big] d\boldsymbol \epsilon_u du  \nonumber  \\
    & - \int_s^t \Big( \partial_t V^{\boldsymbol \pi}(u, S_u, q_u^{\boldsymbol \pi_{new}}) + \frac{1}{2} \sigma^2 \partial_{SS} V^{\boldsymbol \pi} (u, S_u, q_u^{\boldsymbol \pi_{new}}) \Big) du \hspace{0.1cm}\Big| \hspace{0.1cm} S_t = S, q_t^{\boldsymbol \pi_{new}} = q \Big]  
\end{align}
since at time $t$, under policy $\boldsymbol \pi$, the following equality holds 
\begin{align}
    &  \int_{\boldsymbol \epsilon_t} \sum_{k} \Big[\lambda_t^+(k) V^{\boldsymbol \pi}(t, S, q + z_k) + \lambda_t^-(k) V^{\boldsymbol \pi}(t, S, q - z_k) - \big(\lambda_t^+(k)+ \lambda_t^-(k)\big) V^{\boldsymbol \pi}(t, S, q) \Big]   \boldsymbol \pi(\boldsymbol \epsilon_t | t, S, q)d\boldsymbol \epsilon_t \nonumber \\
    & +  \int_{\boldsymbol \epsilon_t} \sum_{k = 1}^{N} \Big[ z_k \lambda_t^+(k)\big( S + \epsilon_t^b(k) \big) - z_k \lambda_t^-(k)\big( S - \epsilon_t^a(k)\big) \Big] \boldsymbol \pi(\boldsymbol \epsilon_t | t, S, q) d\boldsymbol \epsilon_t \nonumber \\
     & - \gamma \int_{\boldsymbol \epsilon_t}\boldsymbol \pi(\boldsymbol \epsilon_t | t, S, q) \log \boldsymbol \pi(\boldsymbol \epsilon_t | t, S, q) d\boldsymbol \epsilon_t - \delta q_t^2  + \partial_t V^{\boldsymbol \pi}(t, S, q) + \frac{1}{2} \sigma^2 \partial_{SS} V^{\boldsymbol \pi}(t, S, q)  \nonumber \\
     & = 0
\end{align}
for the policy $\boldsymbol \pi_{new}$, given its construction, and employing the same calculus of variation arguments as in equations $(10) - (13)$, we can conclude that $\boldsymbol \pi_{new}$ maximizes the following quantity
\begin{align}
    & \underset{\widetilde{\boldsymbol \pi}}{\max} \hspace{0.1cm} \bigg \{\int_{\boldsymbol \epsilon_t} \sum_{k = 1}^{N} \Big[ z_k \lambda_t^+(k)\big( S + \epsilon_t^b(k)  \big) - z_k \lambda_t^-(k)\big( S - \epsilon_t^a(k)  \big) \Big] \widetilde{\boldsymbol \pi}(\boldsymbol \epsilon_t | t, S, q) d\boldsymbol \epsilon_t \nonumber \\
    &+ \int_{\boldsymbol \epsilon_t} \sum_{k} \Big[\lambda_t^+(k) V^{\boldsymbol \pi}(t, S, q + z_k) + \lambda_t^-(k) V^{\boldsymbol \pi}(t, S, q - z_k) - \big(\lambda_t^+(k)+ \lambda_t^-(k)\big) V^{\boldsymbol \pi}(t, S, q) \Big]   \widetilde{\boldsymbol \pi}(\boldsymbol \epsilon_t | t, S, q)d\boldsymbol \epsilon_t \nonumber \\
     & - \gamma \int_{\boldsymbol \epsilon_t} \widetilde{\boldsymbol \pi}(\boldsymbol \epsilon_t | t, S, q) \log \widetilde{\boldsymbol \pi}(\boldsymbol \epsilon_t | t, S, q) d\boldsymbol \epsilon_t \bigg \}
\end{align}
which results in the following inequality 
\begin{align}
        & \int_{\boldsymbol \epsilon_t} \sum_{k} \Big[\lambda_t^+(k) V^{\boldsymbol \pi}(t, S, q + z_k) + \lambda_t^-(k) V^{\boldsymbol \pi}(t, S, q - z_k) - \big(\lambda_t^+(k)+ \lambda_t^-(k)\big) V^{\boldsymbol \pi}(t, S, q) \Big]   \boldsymbol \pi_{new}(\boldsymbol \epsilon_t | t, S, q)d\boldsymbol \epsilon_t \nonumber \\
    & +  \int_{\boldsymbol \epsilon_t} \sum_{k = 1}^{N} \Big[ z_k \lambda_t^+(k)\big( S + \epsilon_t^b(k)  \big) - z_k \lambda_t^-(k)\big( S - \epsilon_t^a(k) \big) \Big] \boldsymbol \pi_{new}(\boldsymbol \epsilon_t | t, S, q) d\boldsymbol \epsilon_t \nonumber \\
     & - \gamma \int_{\boldsymbol \epsilon_t}\boldsymbol \pi_{new}(\boldsymbol \epsilon_t | t, S, q) \log \boldsymbol \pi_{new}(\boldsymbol \epsilon_t | t, S, q) d\boldsymbol \epsilon_t  - \delta q_t^2  + \partial_t V^{\boldsymbol \pi}(t, S, q) + \frac{1}{2} \sigma^2 \partial_{SS} V^{\boldsymbol \pi}(t, S, q)  \nonumber \\
     & \geq 0
\end{align}
then equation $(23)$ yields
\begin{align}
    &\hspace{0.5cm}V^{\boldsymbol \pi}(t, S, q) \nonumber \\
    & \leq \mathbb{E} \Big[ \int_t^s\int_{\boldsymbol \epsilon_t} \sum_{k = 1}^{N} \Big[ z_k \lambda_u^+(k)\big( S_u + \epsilon_u^b(k) \big) - z_k \lambda_u^-(k)\big( S_u - \epsilon_u^a(k)  \big) \Big] \boldsymbol \pi_{new}(\boldsymbol \epsilon_u | u, S_u, q_u^{\boldsymbol \pi_{new}}) d\boldsymbol \epsilon_u du  - \delta \int_t^s q_u^2 du \nonumber \\ 
     & - \gamma \int_t^s \int_{\boldsymbol \epsilon_u} \boldsymbol \pi_{new}(\boldsymbol \epsilon_u | u, S_u, q_u^{\boldsymbol \pi_{new}}) \log  \boldsymbol \pi_{new}(\boldsymbol \epsilon_u | u, S_u, q_u^{\boldsymbol \pi_{new}}) d\boldsymbol\epsilon_u du + V^{\boldsymbol \pi}(s, S_s, q_s^{\boldsymbol \pi_{new}})  \hspace{0.1cm} \Big | \hspace{0.1cm} S_t = S, q_t^{\boldsymbol \pi_{new}} = q \Big] 
\end{align}
by setting $s = T$, we obtain $V^{\boldsymbol \pi}(T, S_T, q_T^{\boldsymbol \pi_{new}}) = V^{\boldsymbol \pi_{new}}(T, S_T, q_T^{\boldsymbol \pi_{new}})$. Substituting this into equation $(75)$, there is
\begin{align}
    &\hspace{0.5cm}V^{\boldsymbol \pi}(t, S, q) \nonumber \\
    & \leq \mathbb{E} \Big[ \int_t^T\int_{\boldsymbol \epsilon_t} \sum_{k = 1}^{N} \Big[ z_k \lambda_u^+(k)\big( S_u + \epsilon_u^b(k) \big) - z_k \lambda_u^-(k)\big( S_u - \epsilon_u^a(k) \big) \Big] \boldsymbol \pi_{new}(\boldsymbol \epsilon_u | u, S_u, q_u^{\boldsymbol \pi_{new}}) d\boldsymbol \epsilon_u du - \delta \int_t^T q_u^2 du\nonumber \\ 
     & - \gamma \int_t^T \int_{\boldsymbol \epsilon_u} \boldsymbol \pi_{new}(\boldsymbol \epsilon_u | u, S_u, q_u^{\boldsymbol \pi_{new}}) \log  \boldsymbol \pi_{new}(\boldsymbol \epsilon_u | u, S_u, q_u^{\boldsymbol \pi_{new}}) d\boldsymbol\epsilon_u du  + V^{\boldsymbol \pi}(T, S_T, q_T^{\boldsymbol \pi_{new}})  \hspace{0.1cm} \Big | \hspace{0.1cm} S_t = S, q_t^{\boldsymbol \pi_{new}} = q \Big]  \nonumber \\
    &\leq V^{\boldsymbol \pi_{new}} (t, S, q)
\end{align}
\end{proof}

\section{Reinforcement Learning Algorithm}
In this section, we introduce two reinforcement learning algorithms. The first algorithm is based on the policy improvement theorem established earlier, and it utilizes a single neural network to model the value function. On the other hand, the second algorithm follows the actor-critic approach, where separate neural networks are employed to model the policy and the value function. This actor-critic algorithm offers the advantage of better control over the range of bid-ask spreads and typically requires less training time compared to the first algorithm
\subsection{Policy Iteration Algorithm}
Given a policy $\boldsymbol \pi$, and $q_t^{\boldsymbol \pi}$ is inventory process under policy $\boldsymbol \pi$. Let the initial condition at time $t$ to be $S_t = S$, $q_t^{\boldsymbol \pi} = q$, the value function under policy $\boldsymbol \pi$ is
\begin{align}
    &\hspace{0.5cm} V^{\boldsymbol \pi}(t, S, q) \nonumber \\
     &= \mathbb{E} \bigg[ \int_t^s \int_{\boldsymbol \epsilon_u} \sum_{k = 1}^{N} \Big[ z_k \big( S_u + \epsilon_u^b(k)  \big) dN_u^+(k) - z_k\big( S_u - \epsilon_u^a(k)  \big) dN_u^-(k) \Big] \boldsymbol \pi(\boldsymbol \epsilon_u | u, S_u, q_u^{\boldsymbol \pi}) d\boldsymbol \epsilon_u  - \delta \int_t^s q_u^2 du \nonumber \\
     & - \gamma \int_t^s \int_{\boldsymbol \epsilon_u}\boldsymbol \pi(\boldsymbol \epsilon_u | u, S_u, q_u^{\boldsymbol \pi}) \log \boldsymbol \pi(\boldsymbol \epsilon_u | u, S_u, q_u^{\boldsymbol \pi}) d\boldsymbol \epsilon_u du + V(s, S_s, q_s^{\boldsymbol \pi}) \hspace{0.1cm} \bigg | \hspace{0.1cm} S_t = S,  q_t^{\boldsymbol \pi} = q \bigg]
\end{align}
then there is 
\begin{align}
     &\mathbb{E} \bigg[ \frac{1}{s-t}\int_t^s \int_{\boldsymbol \epsilon_u} \sum_{k = 1}^{N} \Big[ z_k \big( S_u + \epsilon_u^b(k)  \big) dN_u^+(k) - z_k\big( S_u - \epsilon_u^a(k)  \big) dN_u^-(k) \Big] \boldsymbol \pi(\boldsymbol \epsilon_u | u, S_u, q_u^{\boldsymbol \pi}) d\boldsymbol \epsilon_u - \frac{\delta}{s - t} \int_t^s q_u^2 du \nonumber \\
     & - \frac{\gamma}{s-t} \int_t^s \int_{\boldsymbol \epsilon_u}\boldsymbol \pi(\boldsymbol \epsilon_u | u, S_u, q_u^{\boldsymbol \pi}) \log \boldsymbol \pi(\boldsymbol \epsilon_u | u, S_u, q_u^{\boldsymbol \pi}) d\boldsymbol \epsilon_u du    + \frac{V^{\boldsymbol \pi}(s, S_s, q_s^{\boldsymbol \pi}) - V^{\boldsymbol \pi}(t, S_t, q_t^{\boldsymbol \pi})}{s-t} \hspace{0.1cm} \bigg | \hspace{0.1cm} S_t = S, q_t^{\boldsymbol \pi} = q \bigg] = 0
\end{align}

By taking the limit as $s$ approaches $t$ and parametrizing the value function $V^{\boldsymbol \pi}_{\theta}$, we can define the temporal difference error as follows
\begin{align}
    \delta_t^{\theta} &= \underset{s \to t}{\lim} \hspace{0.1cm}\mathbb{E} \Big[  \frac{V_{\theta}^{\boldsymbol \pi}(s, S_s, q_s^{\boldsymbol \pi}) - V_{\theta}^{\boldsymbol \pi}(t, S_t, q_t^{\boldsymbol \pi})}{s-t} \hspace{0.1cm} \bigg | \hspace{0.1cm} S_t = S, q_t^{\boldsymbol \pi} = q  \Big] - \gamma \int_{\boldsymbol \epsilon_t}\boldsymbol \pi(\boldsymbol \epsilon_t | t, S, q) \log \boldsymbol \pi(\boldsymbol \epsilon_t | t, S, q) d\boldsymbol \epsilon_t   \nonumber \\
    &+ \int_{\boldsymbol \epsilon_t} \sum_{k = 1}^{N} \Big[ z_k \big( S + \epsilon_t^b(k)  \big) dN_t^+(k) - z_k\big( S - \epsilon_t^a(k) \big) dN_t^-(k) \Big] \boldsymbol \pi(\boldsymbol \epsilon_t | t, S, q) d\boldsymbol \epsilon_t - \delta q_t^2 
\end{align}
using the Monte Carlo method to generate a set of sample paths $\mathcal{D} = \{(t_i, S_i^{d}, q_i^{d}, \Delta N_{t_i}^+, \Delta N_{t_i}^-)_{i = 1}^{T} \}_{d = 1}^{D}$, then the loss to be minimized is
\begin{align}
    \textbf{ML}(\theta) &= \frac{1}{2} \sum_{\mathcal{D}} \sum_i \bigg( \frac{V^{\boldsymbol \pi}_{\theta} (t_{i + 1}, S_{t_{i + 1}}^d, q_{t_{i + 1}}^d) - V_{\theta}^{\boldsymbol \pi}(t_i, S_{t_i}^d, q_{t_i}^d)}{\Delta t}   - \gamma \int_{\boldsymbol \epsilon_{t_i}}\boldsymbol \pi(\boldsymbol \epsilon_{t_i} | t_i, S_{t_i}^d, q_{t_i}^d) \log \boldsymbol \pi(\boldsymbol \epsilon_{t_i} | t_i, S_{t_i}^d, q_{t_i}^d) d\boldsymbol \epsilon_{t_i}    \nonumber \\   
    & \int_{\boldsymbol \epsilon_{t_i}} \sum_{k = 1}^{N} \Big[ z_k \big( S_{t_i}^d + \epsilon_{t_i}^b(k) \big) dN_{t_i}^+(k) - z_k\big( S_{t_i}^d - \epsilon_{t_i}^a(k) \big) dN_{t_i}^-(k) \Big] \boldsymbol \pi(\boldsymbol \epsilon_{t_i} | t_i, S_{t_i}^d, q_{t_i}^d) d\boldsymbol \epsilon_{t_i} - \delta q_{t_i}^2
    \bigg)^2 \Delta t \nonumber \\
    &= \frac{1}{2} \sum_{\mathcal{D}} \sum_i \bigg( \frac{V^{\boldsymbol \pi}_{\theta} (t_{i + 1}, S_{t_{i + 1}}^d, q_{t_{i + 1}}^d) - V_{\theta}^{\boldsymbol \pi}(t_i, S_{t_i}^d, q_{t_i}^d)}{\Delta t}   - \gamma \Big(  N(1 + \log 2\pi) + \sum_k \log \frac{\gamma}{2z_kB_k}   \Big)    \nonumber \\  
    & + \sum_k z_k S_{t_i}^d  \big(\Delta N_{t_i}^+(k) - \Delta N_{t_i}^-(k)\big)  - \delta q_{t_i}^2 \nonumber \\
    &+ \sum_k \Big(\big(\frac{A_k}{2B_k} - \frac{\mathcal{H}^+_\theta(t_i, S_{t_i}, q_{t_i}, \boldsymbol \pi)}{2z_k} \big) \Delta N^+_{t_i}(k) + \big(\frac{A_k}{2B_k} - \frac{\mathcal{H}^-_\theta(t_i, S_{t_i}, q_{t_i}, \boldsymbol \pi)}{2z_k} \big) \Delta N^-_{t_i}(k) \Big) \bigg)^2 \Delta t
\end{align}

The following is the policy iteration algorithm
\begin{algorithm}
\caption{Policy Iteration Algorithm}
\begin{algorithmic}
\Require Initialize hyperparameters, learning rate $\alpha$, and initial neural networks parameters for policy and value function, $\theta$, $\phi$
\For {l = 1 to L}
    \For{m = 1 to M}
    \State Generate one sample path $\mathcal{D} = \{(t_i, S_{t_i}, q_{t_i})_{i = 1}^T\}$ under policy $\boldsymbol \pi^{\phi}$
\EndFor
    \State Compute $\textbf{ML}(\theta)$
    \State Updates $\theta \leftarrow \theta - \alpha \nabla_{\theta} \textbf{ML}(\theta)$
    \State Update $\boldsymbol \pi^{\phi} \leftarrow \mathcal{N}\Big(\hspace{0.1cm} \boldsymbol \epsilon \hspace{0.1cm} \big | \bigg(\frac{A_k}{2B_k} - \frac{\mathcal{H}^{\pm}_{\theta}(t, S, q, \boldsymbol \pi^{\phi})}{2z_k}  \bigg), \Sigma \Big)$
\EndFor
\end{algorithmic}
\end{algorithm}

\subsection{Actor-Critic Algorithm}
Given our knowledge that the optimal policy follows a Gaussian distribution, we can consider the following model. We introduce a neural network $V^\theta(t, S, q)$ with input $(t, S, q)$ and outputting a scalar value. Additionally, we model the mean of the policy using a neural network, denoted as $\boldsymbol{\pi}^\phi = \mathcal{N} (\boldsymbol M^\phi(t, S, q), \boldsymbol \Sigma)$.

When generating a sample path denoted by $\mathcal{D} = { (t_i, S_{t_i}, q_{t_i}, \boldsymbol \epsilon_{t_i}, \Delta N_{t_i}^+, \Delta N_{t_i}^-)_{i = 0}^{T} }$, we can consider the temporal-difference error for the value function, which is defined as
\begin{align}
    \delta^\theta _{t_i} &= r_{t_i} + V^\theta(t_{i + 1}, S_{t_{i + 1}}, q_{t_{i + 1}}) -  V^\theta(t_{i}, S_{t_{i}}, q_{t_{i}})  \nonumber \\
    &= \big(z_1 \Delta N_{t_i}^+(1), z_1 \Delta N_{t_i}^-(1) , ..., z_N \Delta N_{t_i}^+(N) ,z_N \Delta N_{t_i}^-(N) \big) \boldsymbol{M}^\phi(t, S, q)  + \Big [ q_{t_{i+1}}S_{t_{i+1}} - q_{t_i}S_{t_i} \nonumber \\
    & - \delta q_{t_i}^2 - \gamma \big( N(1 + \log 2\boldsymbol \pi) + \frac{1}{2} \log | \boldsymbol{\Sigma} | \big) \Big] \Delta t + V^\theta(t_{i + 1}, S_{t_{i + 1}}, q_{t_{i + 1}}) -  V^\theta(t_{i}, S_{t_{i}}, q_{t_{i}})
\end{align}

The critic loss is 
\begin{align}
    L(\theta) = \frac{1}{2} \sum_{i = 0}^{T - 1} (\delta_{t_i}^\theta)^2 
\end{align}
then the critic loss's gradient is 
\begin{align}
    \nabla_\theta L(\theta) = \sum_{i = 0}^{T - 1} \delta_{t_i}^\theta \nabla_\theta \delta_{t_i}^\theta
\end{align}
the policy gradient is 
\begin{align}
    \nabla_\phi J(\phi) = \sum_{i = 0}^{T - 1} \delta_{t_i}^\theta \nabla_\phi \log \boldsymbol{\pi}^\phi(\boldsymbol \epsilon_{t_i} |t_{i}, S_{t_{i}}, q_{t_{i}}) 
\end{align}

\begin{algorithm}
\caption{Actor-Critic Algorithm}
\begin{algorithmic}
\Require Initialize hyperparameters, learning rates $\alpha$, $\beta$, and for policy $\boldsymbol \pi^{\phi_0}$, and value function $V^{\theta_0}$
\For {l = 1 to L}
    \State Generate one sample path $\mathcal{D} = \{t_i, S_{t_i}, q_{t_i}, \Delta N_{t_i}^+, \Delta N_{t_i}^-  \}$ under policy $\boldsymbol \pi^{\phi}$
    \State Compute $L(\theta)$
    \State Updates $\theta \leftarrow \theta - \alpha \nabla_\theta L(\theta)$
    \State Updates $\phi \leftarrow \phi - \beta \nabla_\phi J(\phi)$
\EndFor
\end{algorithmic}
\end{algorithm}

\section{Numerical Results}
During our numerical analysis, we established specific parameter values to facilitate the evaluation. The trading window was set to $T = 1$, with a trading interval of $dt = 0.01$. The volatility of the mid-price was set at $\sigma = 0.05$, the exploration coefficient was $\gamma = 0.01$, and the penalty coefficient for inventory was $\delta = 0.01$. Additionally, we defined the order size tiers and the coefficient of market order arrival intensity as follows 
\begin{align*}
    &z = [10, 20, 30, 40, 50, 60] \\
    &A = [20, 18, 15, 12, 10, 8] \\
    &B = [1, 1, 1, 1, 1, 1]
\end{align*}

Our investigation starts with numerical studies on the policy improvement algorithm. The algorithm initiates by initializing two neural networks: one dedicated to capturing the bid-ask policy and another designed to approximate the corresponding value function. Through multiple epochs of training, the bid-ask policy network undergoes refinement. Subsequently, it is replaced with the newly trained value network as prescribed in the policy iteration algorithm.

To determine the most suitable neural network structure for the policy and value functions, we employ three different architectures: Multilayer Perceptron (MLP), Convolutional Neural Network (CNN), and CNN with a residual block. These network architectures are utilized as initial models for the policy and value functions. To evaluate the approximation capabilities of these neural networks with respect to the true value function under a specific policy, we compute the martingale loss, as defined in Equation (32), using 10 randomly generated samples per epoch. The value function network is trained for a total of 50 epochs.

Figures 1-3 display the loss functions associated with the various neural network architectures, with the loss range appropriately adjusted for visualization purposes. Furthermore, Figure 4 presents a comparative analysis of these three loss plots, with a common constant utilized to normalize the loss range. It is essential to emphasize that the constant values employed for division differ across the four figures to ensure accurate representation and comparison.

\begin{figure}[!htp]
  \centering
  \begin{minipage}{0.3\textwidth}
    \centering
    \includegraphics[width=\linewidth]{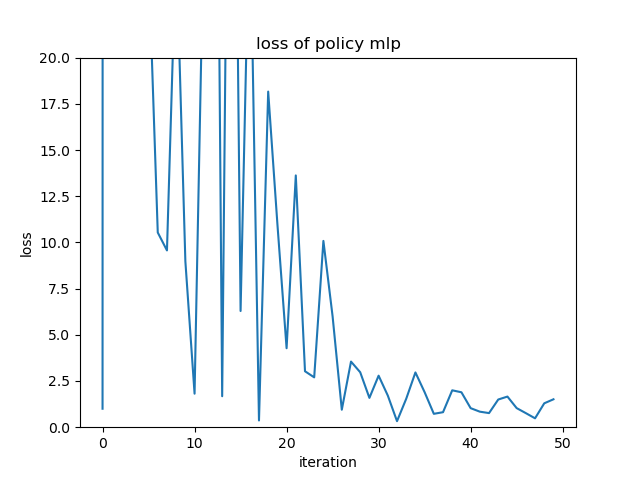}
    \caption{loss of MLP network}
    \label{fig:graph1}
  \end{minipage}
  \hfill
  \begin{minipage}{0.3\textwidth}
    \centering
    \includegraphics[width=\linewidth]{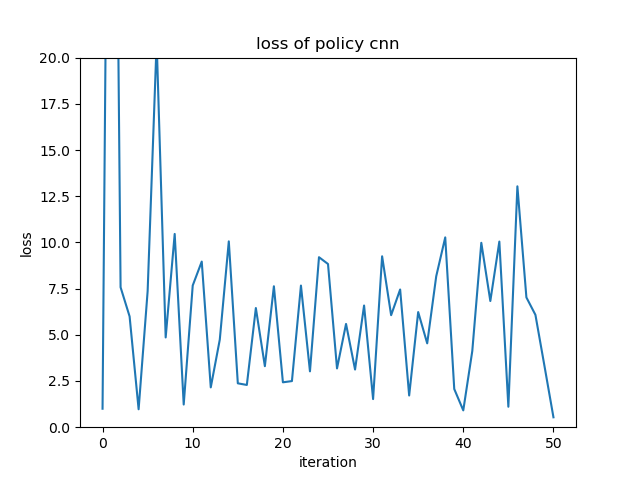}
    \caption{loss of CNN network}
    \label{fig:graph2}
  \end{minipage}%
  \hfill
  \begin{minipage}{0.3\textwidth}
    \centering
    \includegraphics[width=\linewidth]{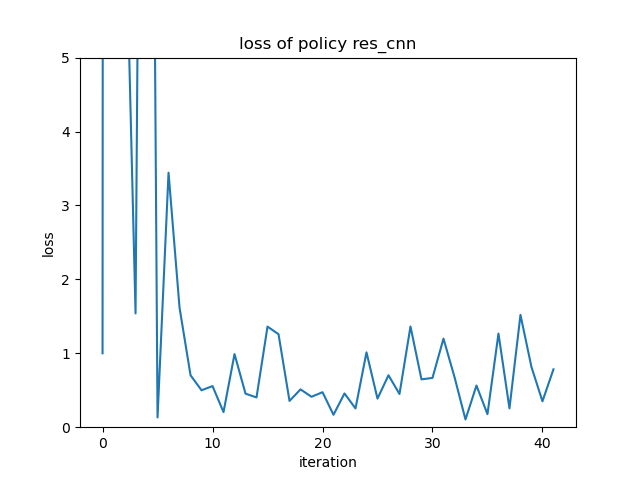}
    \caption{loss of CNN with residual block}
    \label{fig:graph3}
  \end{minipage}
\end{figure}

\begin{figure}[!htp]
    \centering
    \includegraphics[width = 0.5\textwidth]{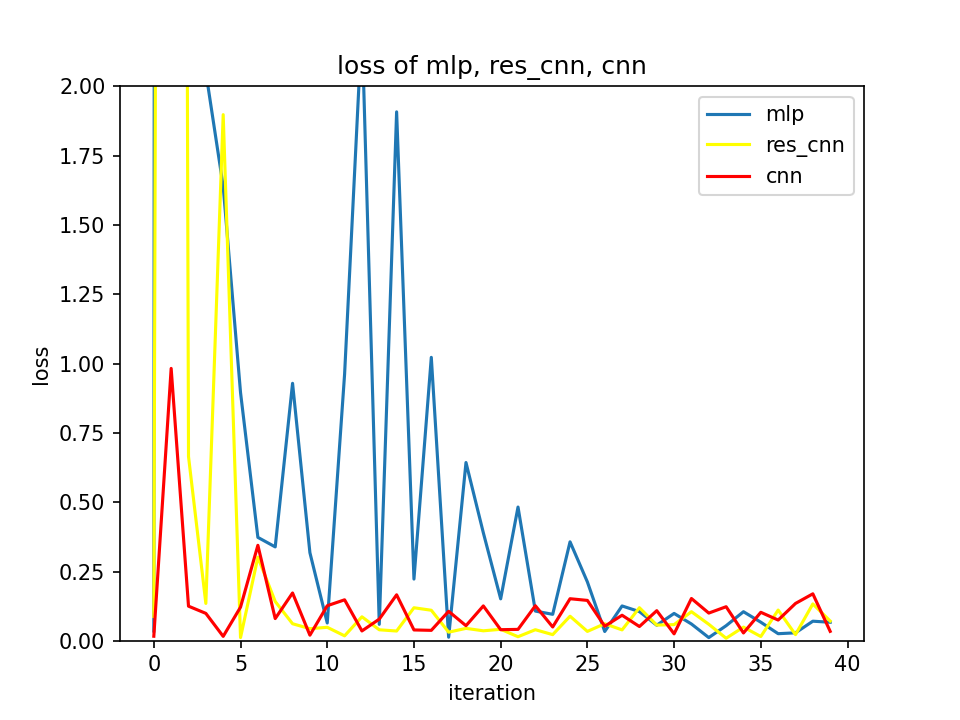}
    \caption{loss of three networks}
    \label{fig:enter-label}
\end{figure}

The loss plots presented above clearly indicate that the CNN with a residual block demonstrates a slightly higher level of approximation capability compared to the CNN without a residual block. Additionally, both CNN architectures outperform the MLP in terms of their ability to approximate the value function. Based on these findings, we have made the decision to utilize the CNN with a residual block for the subsequent numerical studies, as it exhibits the most promising performance in terms of accuracy and approximation.

To model the value function in the policy iteration algorithm and actor-critic algorithm, we utilize a convolutional neural network (CNN) with a residual block. The inputs to the neural network are the variables $(t, S, q)$, and the output is a scalar representing the predicted value.

In the actor-critic algorithm, we employ the same neural network structure to model the mean of the Gaussian policy. However, we modify the output layer to produce a vector representing the mean of the bid-ask spreads.

The CNN architecture consists of 1-dimensional convolutional layers with identical settings. Each convolutional layer has an input channel of 1 and an output channel of 2, with a kernel size of 3 and stride and padding set to 1. The residual block is composed of two convolutional layers with Rectified Linear Unit (ReLU) as the activation function.

The overall structure of the neural network involves mapping the inputs to a higher-dimensional space using a linear layer, which in our case has a dimension of 128. This is followed by two residual blocks and four linear layers with varying dimensions. The activation function used throughout the network is ReLU. By employing this CNN architecture with a residual block, we aim to capture complex patterns and relationships in the input data, allowing for more accurate predictions of the value function and bid-ask spreads.

In our numerical analysis, we conduct a total of five policy iterations using the policy iteration algorithm. At each iteration, we follow the prescribed steps of the algorithm to update the policy. Figures 5-9 showcase the return distributions obtained at each policy iteration. To evaluate the performance of each iteration, we generate a dataset consisting of 100 samples and compute the corresponding returns. These returns are then visualized using histograms, providing a discrete representation of the distribution. To complement the discrete histograms, we employ Gaussian kernel density estimation to derive a continuous representation of the return distributions. This estimation technique allows us to obtain a smooth and continuous description of the observed returns, enabling a more detailed analysis of their characteristics. By examining the return distributions across the five policy iterations, we can gain insights into the evolution of the policy and its impact on the overall returns.

\begin{figure}[!htp]
  \centering
  \begin{minipage}{0.45\textwidth}
    \centering
    \includegraphics[width=\linewidth]{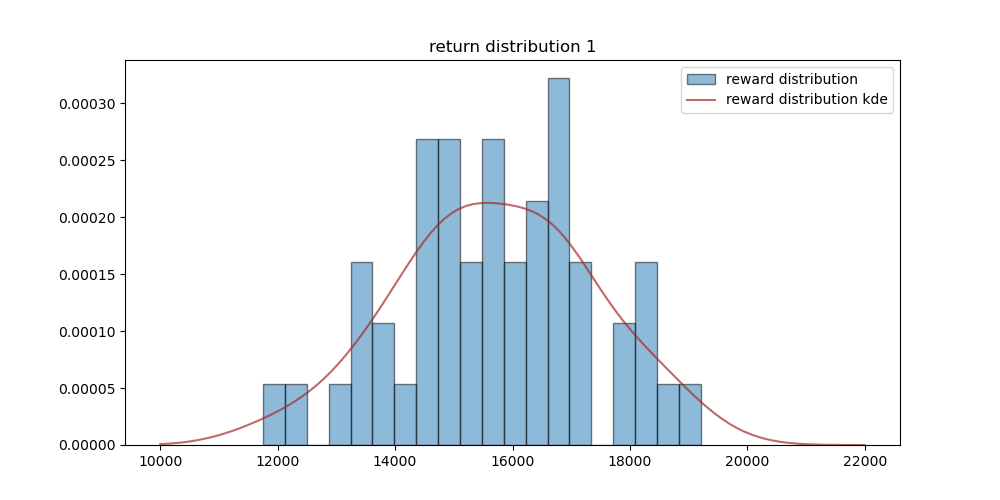}
    \caption{return distribution of after first policy iteration}
    \label{fig:graph1}
  \end{minipage}
  \hfill
  \begin{minipage}{0.45\textwidth}
    \centering
    \includegraphics[width=\linewidth]{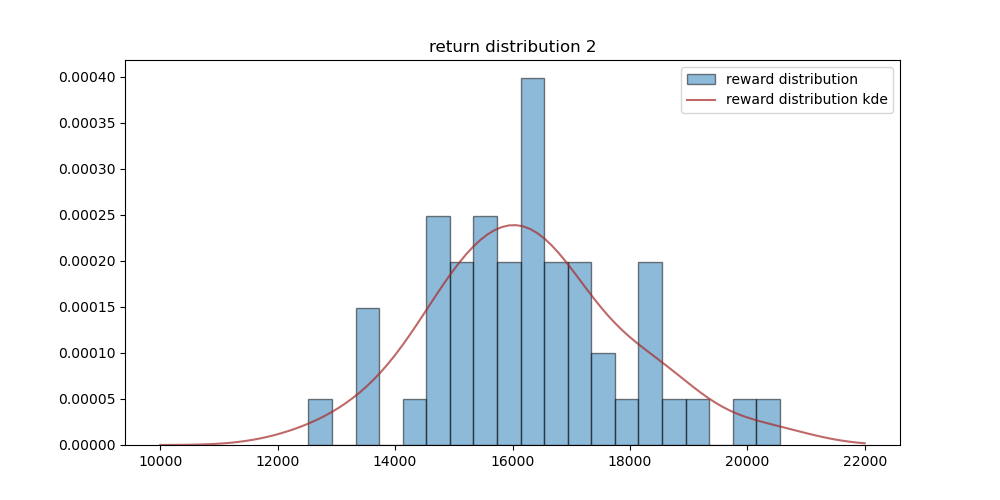}
    \caption{return distribution of after second policy iteration}
    \label{fig:graph2}
  \end{minipage}%
\end{figure}

\begin{figure}[!htp]
  \centering
   \begin{minipage}{0.45\textwidth}
    \centering
    \includegraphics[width=\linewidth]{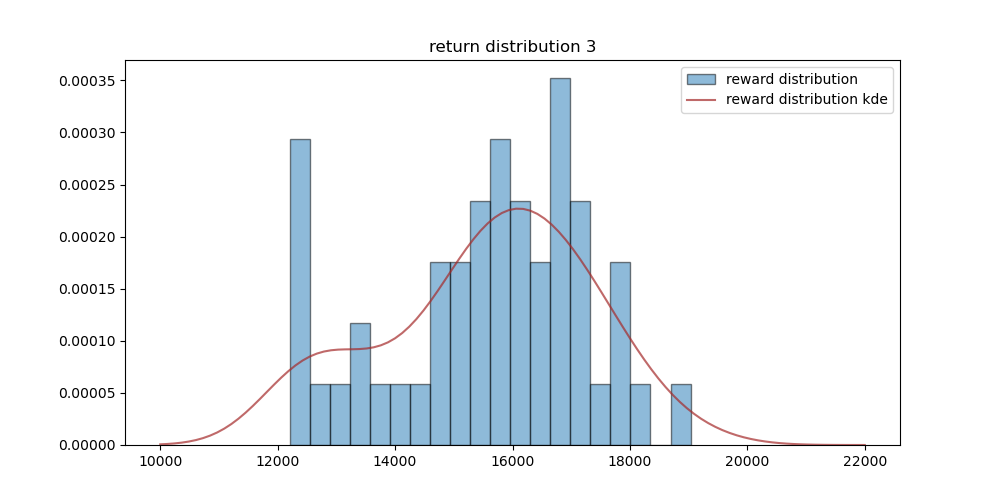}
    \caption{return distribution of after third policy iteration}
    \label{fig:graph3}
  \end{minipage}
  \hfill
  \begin{minipage}{0.45\textwidth}
    \centering
    \includegraphics[width=\linewidth]{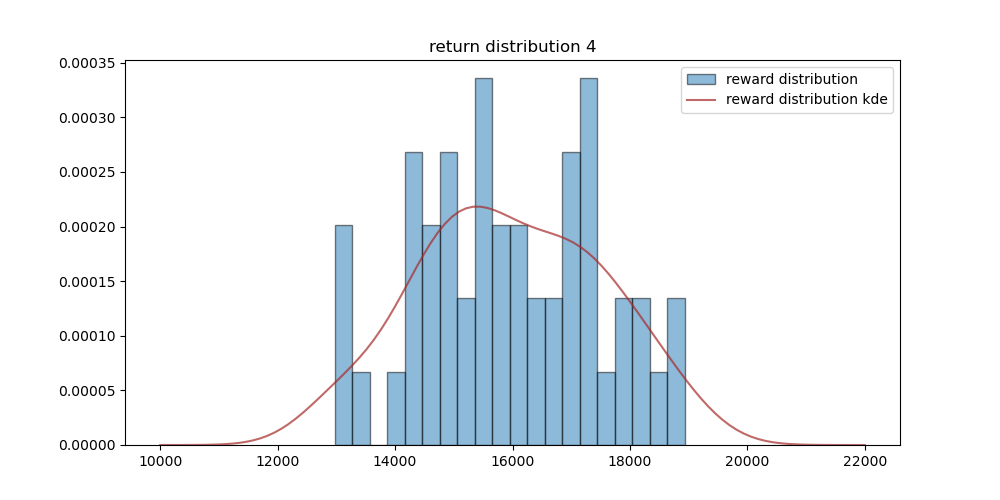}
    \caption{return distribution of after fourth policy iteration}
    \label{fig:graph1}
  \end{minipage}
\end{figure}

\begin{figure}
    \centering
    \includegraphics[width=0.45\textwidth]{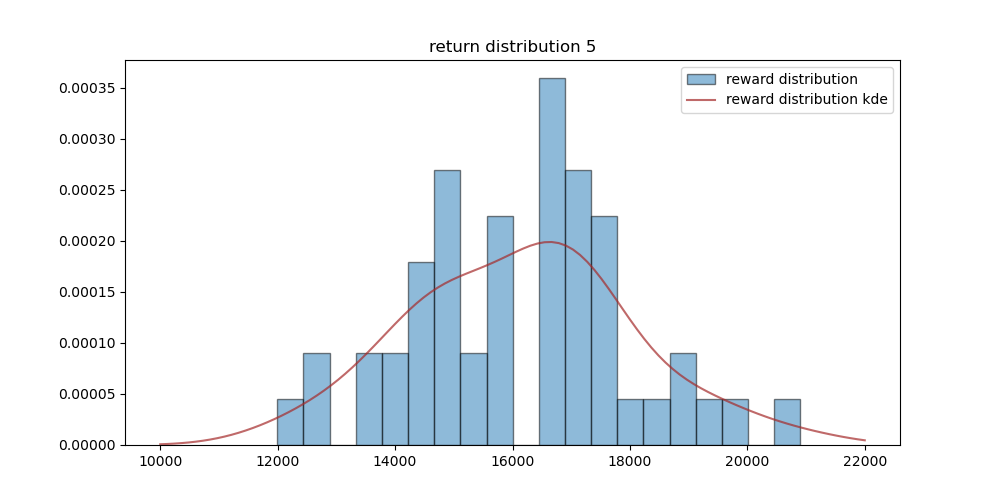}
    \caption{return distribution of after fifth policy iteration}
    \label{fig:graph1}
\end{figure}

Figure 10 presents a comparison of the kernel density estimation (KDE) plots representing the return distributions across the five policy iterations. This comparison allows us to observe the changes in the return distributions as the policy iteration procedure progresses. Furthermore, in Figure 11, we specifically focus on the KDE plots of the return distributions from the first policy iteration and the fifth policy iteration. By examining these specific iterations, we can gain a more detailed understanding of how the return distribution evolves over the course of the policy iteration algorithm. From Figure 11, we can observe that the overall return distribution shows improvement as we progress through the policy iteration procedure. This suggests that the policy iteration algorithm is effective in refining the policy and enhancing the performance in terms of returns. 

It is crucial to acknowledge that the observed returns may exhibit unusually high values, and one potential explanation for this occurrence is the unrealistic magnitude of the bid-ask spreads. To provide further insight into this matter, Figures 12 and 13 showcase the bid-ask spreads at time $t = 0$ for tier 1 and tier 2 securities, respectively, across different reference prices and inventory levels. By examining these figures, we can gain a better understanding of the spread values in different scenarios. Furthermore, Figures 14 and 15 illustrate the bid and ask prices for all tiers at time $t = 0$, with a reference price of $S = 1$ and an inventory level of $q = 50$, which do fit the intuition that larger size order will get narrower bid-ask spreads. These figures offer a visual representation of the bid and ask prices for various tiers and can contribute to our understanding of OTC market making. 

\begin{figure}[!htp]
  \centering
   \begin{minipage}{0.45\textwidth}
    \centering
    \includegraphics[width=\linewidth]{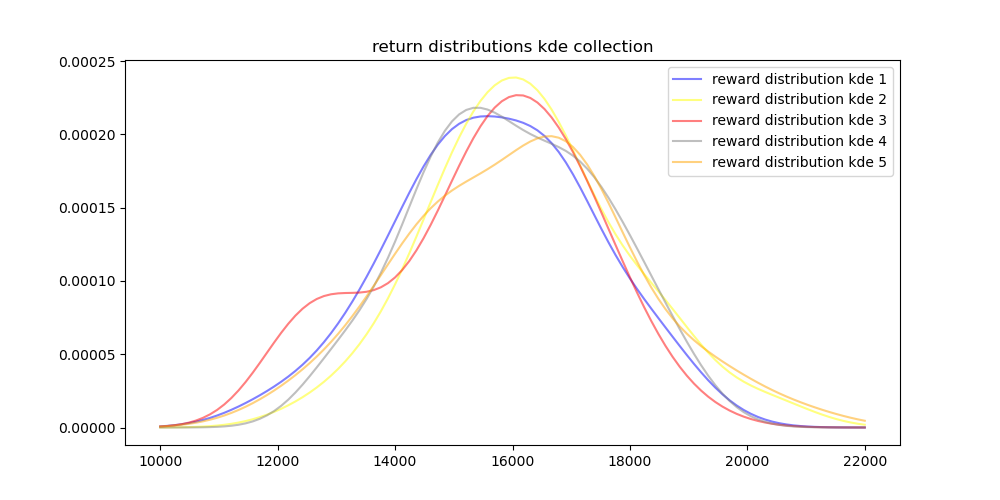}
    \caption{Gaussian kernel density estimations for return distributions of five policy iterations}
    \label{fig:graph3}
  \end{minipage}
  \hfill
  \begin{minipage}{0.45\textwidth}
    \centering
    \includegraphics[width=\linewidth]{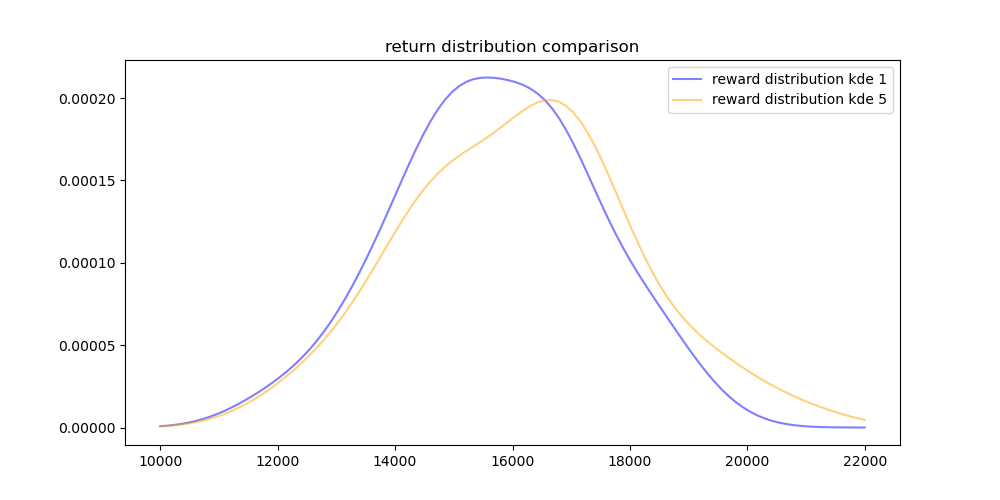}
    \caption{comparison between first and fifth policy iteration's return distribution}
    \label{fig:graph1}
  \end{minipage}
\end{figure}

\begin{figure}[!htp]
  \centering
   \begin{minipage}{0.45\textwidth}
    \centering
    \includegraphics[width=\linewidth]{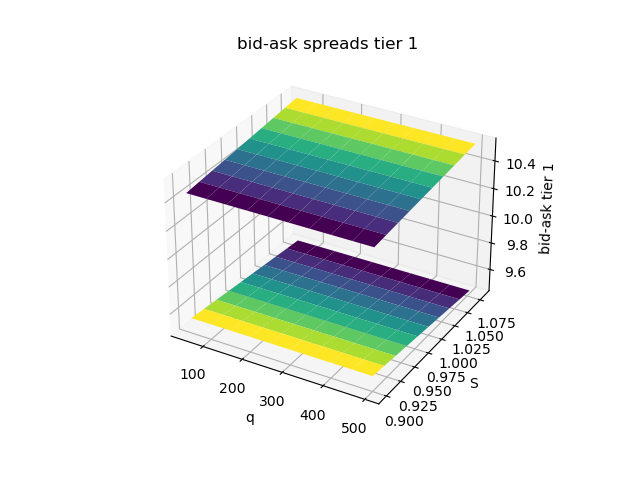}
    \caption{tier 1 bid-ask when t = 0}
    \label{fig:graph3}
  \end{minipage}
  \hfill
  \begin{minipage}{0.45\textwidth}
    \centering
    \includegraphics[width=\linewidth]{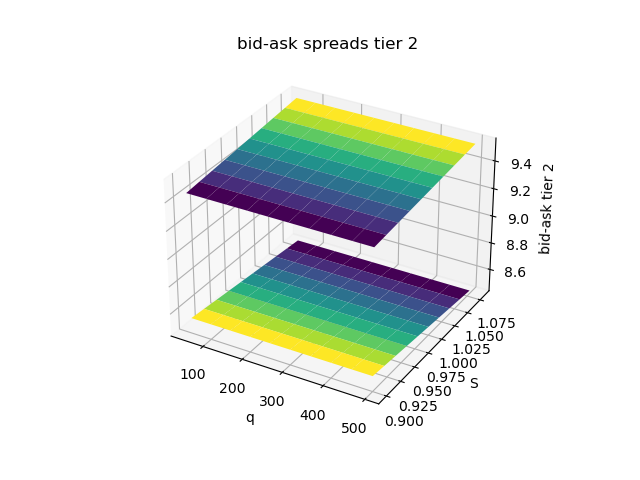}
    \caption{tier 2 bid-ask when t = 0}
    \label{fig:graph1}
  \end{minipage}
\end{figure}

\begin{figure}[!htp]
  \centering
   \begin{minipage}{0.45\textwidth}
    \centering
    \includegraphics[width=\linewidth]{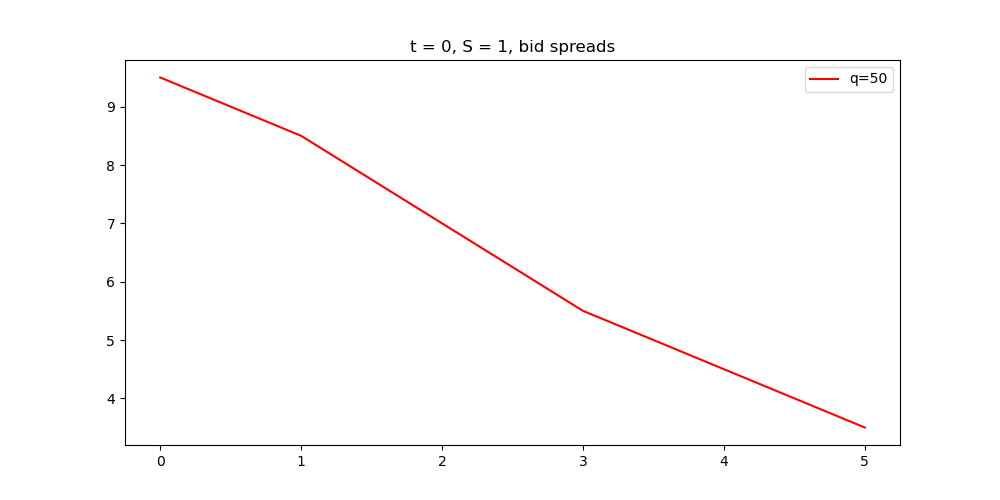}
    \caption{bid spreads when t = 0, S = 1, q = 50}
    \label{fig:graph3}
  \end{minipage}
  \hfill
  \begin{minipage}{0.45\textwidth}
    \centering
    \includegraphics[width=\linewidth]{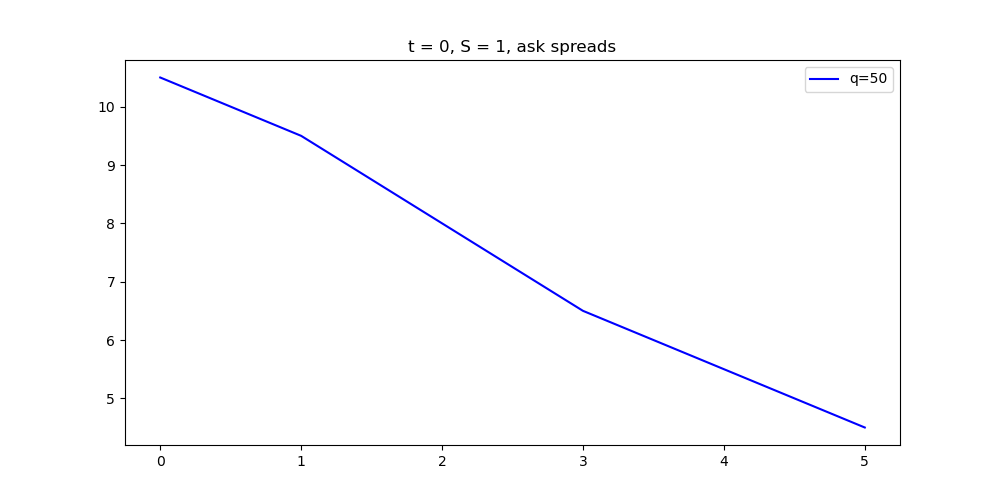}
    \caption{ask spreads when t = 0, S = 1, q = 50}
    \label{fig:graph1}
  \end{minipage}
\end{figure}

Extensive analysis has been conducted to carefully examine this phenomenon. One of the primary reasons for the policy iteration algorithm yielding inflated bid-ask spreads lies in the assumption that the arrival intensity of market orders is linearly dependent on the bid-ask spreads. This modeling assumption becomes unrealistic when confronted with excessively large bid-ask spreads. Consequently, the algorithm can still operate, albeit with potential complications arising from negative bidding prices. Such issues can have profound implications.

Moreover, the policy iteration algorithm has limitations in controlling bid-ask spreads due to the approximation capabilities of neural networks and a limited number of inputs. In contrast, the actor-critic algorithm addresses this challenge by employing a neural network to model the mean output of the bid-ask policy, allowing for better control over bid-ask spreads. This advantage is attributed to the use of suitable activation functions, which constrain the proposed spreads within a reasonable range. Additionally, the actor-critic algorithm offers faster training procedures compared to the policy iteration algorithm. These benefits make the actor-critic algorithm a promising approach for bid-ask spread control in practice.

Figures 16 and 17 provide an insightful depiction of the critic loss and policy loss, respectively, as defined in the actor-critic algorithm.

\begin{figure}[!htp]
  \centering
   \begin{minipage}{0.45\textwidth}
    \centering
    \includegraphics[width=\linewidth]{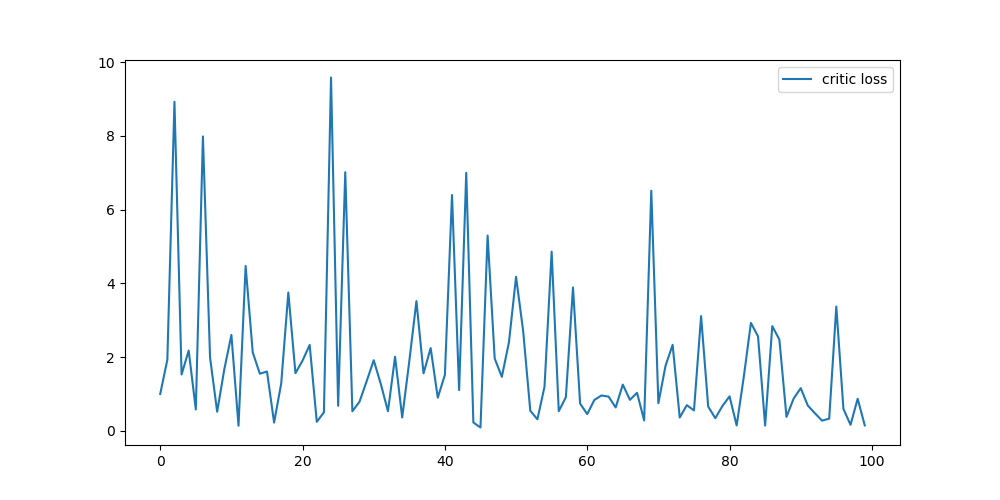}
    \caption{critic loss of actor-critic algorithm}
    \label{fig:graph3}
  \end{minipage}
  \hfill
  \begin{minipage}{0.45\textwidth}
    \centering
    \includegraphics[width=\linewidth]{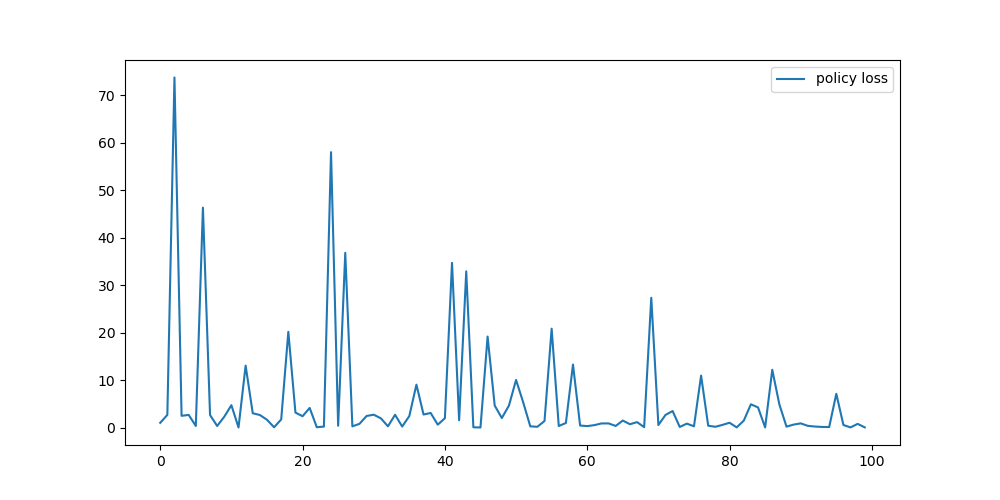}
    \caption{policy loss of actor-critic algorithm}
    \label{fig:graph1}
  \end{minipage}
\end{figure}

\begin{figure}[!htp]
  \centering
   \begin{minipage}{0.45\textwidth}
    \centering
    \includegraphics[width=\linewidth]{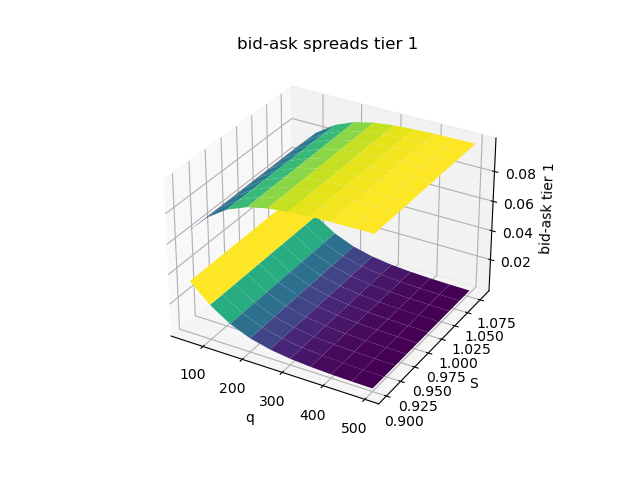}
    \caption{bid-ask spreads of tier 1}
    \label{fig:graph3}
  \end{minipage}
  \hfill
  \begin{minipage}{0.45\textwidth}
    \centering
    \includegraphics[width=\linewidth]{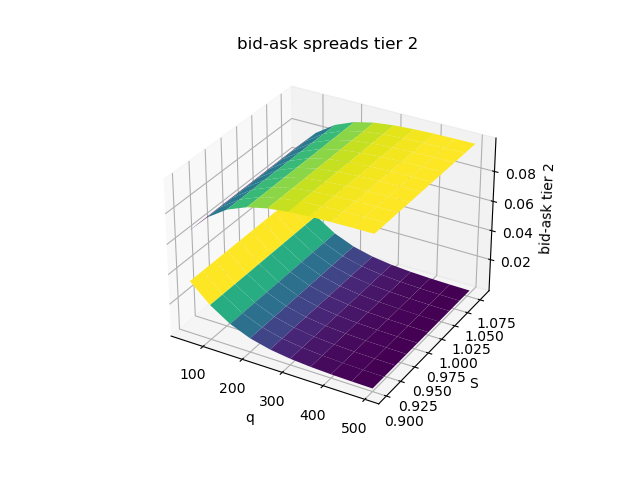}
    \caption{bid-ask spreads of tier 2}
    \label{fig:graph1}
  \end{minipage}
  \hfill
  \begin{minipage}{0.45\textwidth}
    \centering
    \includegraphics[width=\linewidth]{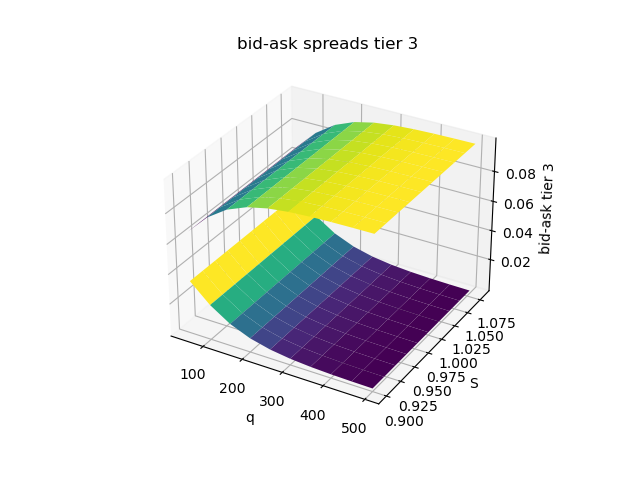}
    \caption{bid-ask spreads of tier 3}
    \label{fig:graph1}
  \end{minipage}
\end{figure}

\begin{figure}
    \centering
    \includegraphics[width = \textwidth]{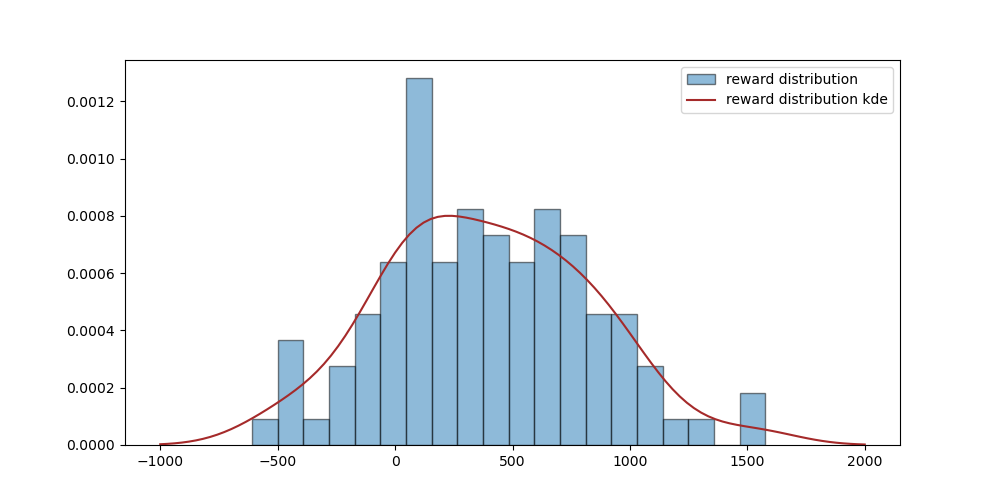}
    \caption{reward distribution actor-critic algorithm}
    \label{fig:enter-label}
\end{figure}

Figures 18-20 visually demonstrate the bid-ask spreads at time $t = 0$ for tiers 1, 2, and 3, respectively. These figures highlight the improved realism and coherence of the bid-ask spreads compared to earlier iterations. Notably, the bid spreads tend to increase as the inventory level rises, while the ask spreads decrease accordingly. This behavior aligns with the goal of effectively managing inventory and reflects the dynamics of a more realistic market-making strategy. The observed patterns validate the efficacy of the refined bid-ask spread formulation in capturing the expected market behavior.

Figure 21 displays the return distribution obtained from the actor-critic algorithm. While the returns in this distribution may not exhibit the same level of exceptional performance as those in the policy iteration algorithm, they demonstrate a closer alignment with the expectations of a realistic market-making strategy. It is important to note that although there are instances of negative returns, the overall profitability of the market-making strategy is evident. This observation further validates the efficacy of the actor-critic algorithm in generating a practical and profitable approach to market-making.

\section{Conclusion}
In our future research, we recognize the importance of going beyond the assumptions made in our current study. To achieve this, we plan to explore non-parametric methods that offer greater flexibility and adaptability to real market data. These methods will allow us to capture the complexities of actual market dynamics and uncover insights that better align with real-world scenarios. By delving into non-parametric techniques, we expect to refine our understanding of the market-making problem and generate more nuanced and robust results.

In summary, this paper introduces a novel reinforcement learning framework specifically tailored for addressing the multi-dimensional market-making problem in OTC markets. By utilizing a stochastic policy and investigating the relationship between market order arrivals and bid-ask spreads, we demonstrate the effectiveness of our proposed approach through extensive numerical experiments. Moving forward, our research efforts will focus on incorporating non-parametric methods to enhance the adaptability of our framework to real-world market data.

\newpage
\bibliography{Reference.bib}
\bibliographystyle{apacite}

\end{document}